\theoremstyle{plain}
\newtheorem{thm}{Theorem}[section]
\newtheorem{lem}{Lemma}
\newtheorem{prop}{Proposition}
\newtheorem{defn}{Definition}
\newtheorem{rem}{Remark}
\def\F {{\mathbf{F}}}
\def\c {{\mathbf{c}}}
\def\x {{\mathbf{x}}}
\def\y {{\mathbf{y}}}
\def\v {{\mathbf{v}}}
\def\rank{{\rm rank}}
\begin{document}
%
\title{On List-decodability of Random Rank Metric Codes}


\author{\IEEEauthorblockN{Yang Ding\IEEEauthorrefmark{1,2}}
\IEEEauthorblockA{\IEEEauthorrefmark{1}Department of Mathematics, Shanghai University, P. R. China.}
\IEEEauthorblockA{\IEEEauthorrefmark{2}Division of Mathematical Science, Nanyang Technological University, Singapore.}
\thanks{This work was supported by National Natural Science Foundation of China(11201286) and a grant of "The First-Class Discipline of Universities in Shanghai". This work is also partially supported by the Singapore A*STAR SERC under Research Grant 1121720011.
Corresponding author: Y. Ding (email: dingyang@shu.edu.cn).}}

\IEEEtitleabstractindextext{%
\begin{abstract}
In the present paper, we consider list decoding for both random rank metric codes and random linear rank metric codes. Firstly, we show that, for arbitrary $0<R<1$ and $\epsilon>0$ ($\epsilon$ and $R$ are independent), if $0<\frac{n}{m}\leq \epsilon$, then with high probability a random rank metric code in $\F_{q}^{m\times n}$ of rate $R$ can be list-decoded up to a fraction $(1-R-\epsilon)$ of rank errors with constant list size $L$ satisfying $L\leq O(1/\epsilon)$. Moreover, if $\frac{n}{m}\geq\Theta_R(\epsilon)$, any rank metric code in $\F_{q}^{m\times n}$ with rate $R$ and decoding radius $\rho=1-R-\epsilon$ can not be list decoded in ${\rm poly}(n)$ time. Secondly,  we show that if  $\frac{n}{m}$ tends to a constant $b\leq 1$, then every $\F_q$-linear rank metric code in $\F_{q}^{m\times n}$ with rate $R$ and list decoding radius $\rho$ satisfies the Gilbert-Varsharmov bound, i.e., $R\leq (1-\rho)(1-b\rho)$. Furthermore, for arbitrary $\epsilon>0$ and any $0<\rho<1$, with high probability a random $\F_q$-linear rank metric codes with rate $R=(1-\rho)(1-b\rho)-\epsilon$ can be list decoded up to a fraction $\rho$ of rank errors with constant list size $L$ satisfying $L\leq O(\exp(1/\epsilon))$.
\end{abstract}

\begin{IEEEkeywords}
Rank metric codes, $\F_q$-linear rank metric codes, list decoding, subspace codes, Gaussian number.
\end{IEEEkeywords}}

\maketitle

 \section{Introduction}
Rank metric codes have found various applications in  network coding, space time coding, magnetic recording and cryptography, etc. In a rank metric code, each codeword is a matrix over a finite field and the distance between two codewords is defined as the rank of their difference. The concept of rank metric was first introduced by Hua \cite{hua}, and then considered in coding theory by Delsarte \cite{dels}. By adapting the idea of Reed-Solomon code, Gabidulin \cite{Gabi10} gave a construction of a class of rank metric codes which is optimal and achieves the Singleton bound.

The problem of uniquely decoding the Gabidulin codes up to half of minimum distance has received a lot of attention in the recent years. In fact, it has been solved several times, by adapting the different
approaches for unique decoding Reed-Solomon codes to the linearized setting, starting with Gabidulin's
original paper \cite{Gabi10}, and later  in \cite{richter} a method based on the Berlekamp-Massey algorithm (or the extended Euclidean algorithm) have been proposed, and then in \cite{loid05} a method based on a Welch-Berlekamp key equation have been provided. These two methods are most efficient for high-rate and low-rate codes, respectively \cite{Gadouleau082}.

List decoding, was introduced by Elias and Wonzencraft independently, is a relax version of unique decoding for which decoder allows to output a list of possible codewords. List decoding gives a possibility  to decode beyond half of distance number of errors. A fundamental problem in list decoding is to find the tradeoff among the information rate, decoding radius and the list size. List decoding of rank metric codes has been extensively studies \cite{guruxing12,guruwang12,kkcodes,mahvar,rose,traut,wach12,wach13}. However, even for Gabidulin codes which share several common properties and results paralleling with Reed-Solomon codes, people have not found an effective list decoding algorithm with decoding radius beyond the half of the distance.
 \subsection{Known results}
 Let us briefly summarize some of previous results on list decoding of rank metric codes.\begin{itemize}
\item[(i)] For the Gabidulin codes, Loidreau \cite{loid05} gave a Welch-Berlekamp like algorithm to uniquely decode up to half of distance numbers of rank errors. However,  no polynomial-time list decoding algorithms have been found to decode beyond half of minimum distance. Moreover, we  do not know whether such an algorithm exists or not. On the other hand, Wachter-Zeh \cite{wach12} showed that for a Gabidulin code with rate $R$, if the list decoding radius is greater than Johnson radius $1-\sqrt{R}$, then list size $L$ must be exponential in the length of codes. This implies that no polynomial-time list decoding algorithms for the Gadibulin codes exist in this case (note that in the case of Reed-Solomon codes, it is still an open problem on whether  there is a polynomial-time list decoding algorithms that decodes more than  $1-\sqrt{R}$ fraction of errors.) \item[(ii)] Some variants of the Gabidulin codes can be list decoded with decoding radius up to the Singleton bound. By the Monte Carlo construction, Guruswami and Xing \cite{guruxing12} showed that there exist  subcodes of the Gabidulin codes with rate $R$ which can be efficiently list decoded up to a fraction $(1-R-\epsilon)$ of rank errors. Recently, by explicitly constructing certain subspace designs,  Guruswami and Wong \cite{guruwang13} presented a deterministic algorithm to decode the subcodes of the Gadibulin codes constructed in \cite{guruxing12}.
    Mahdavifar and Vardy \cite{mahvar} showed that one can list decode folded-Gabidulin codes with rate $R$ and decoding radius  up to $1-R-\epsilon$. However,  the output list size of their algorithm is exponential in the length of the code.
    \end{itemize}

\subsection{Our results} The current paper focuses  on list decoding of random rank metric codes.  Our contribution of this paper is two-fold. More precisely, we have the following two results.
\begin{itemize}
\item[(i)] Firstly, for any real numbers $R$ and $\epsilon$ with $0<R<1$ and $\epsilon>0$ ($\epsilon$ and $R$ are independent), and sufficiently large integers $n, m$,
\begin{itemize}\item[(1)] if $0<\frac{n}{m}\leq \epsilon$, then with high probability a random rank metric code  of rate $R$ in $\F_{q}^{m\times n}$  can be list-decoded up to a fraction $(1-R-\epsilon)$ rank errors with list size $L\leq O(1/\epsilon)$; \item[(2)] if $\frac{n}{m}\geq\Theta_R(\epsilon)$, then list size $L$ of a rank metric code  with decoding radius $1-R-\epsilon$ in $\F_{q}^{m\times n}$ must be exponential in $n$.\end{itemize}\item[(ii)]Secondly, let $b=\frac{n}{m}\leq 1$ be a constant. For $\epsilon>0$ and $0<\rho<1$, let $R=(1-\rho)(1-b\rho)-\epsilon$, we show that with high probability a random $\F_q$-linear rank metric codes  with rate $R$  is $(\rho, O(\exp(1/\epsilon)))$-list decodable. Furthermore, for any $\F_q$-linear rank metric code with list decoding radius $\rho$, its rate $R$ is at most $(1-\rho)(1-b\rho)$.
\end{itemize}

\subsection{Organiztion} The paper is organized as follows. In Section 2, we introduce notations of rank metric codes and list decoding. Moreover, a constrains on list decoding radius is derived in the same section. In Section 3 random rank metric codes are discussed. In particular, we show that almost all random rank metric codes are good list decodable codes if the ratio $n/m$ is small.  In Section 4, random linear rank metric codes are studied. Similar results are obtained by replacing the Singleton bound by the Gilbert-Varsharmov bound.  Finally, in Section 5, we draw conclusions and mention some open problems in list decoding of rank metric codes.
\section{Preliminary}
\subsection{Rank Metric Codes} Denote by $\F_q$ the finite field with $q$ elements. Let $\F_{q}^{m\times n}$ denote the set of all $m\times n$ matrices over $\F_q$.  Without loss of generality, we always assume that $n\leq m$ in this paper (indeed, if $n\geq m$, we can simply consider the transpose of matrices). For any $X,Y\in\F_{q}^{m\times n}$, the rank distance is defined by $$d_R(X,Y):=\rank(X-Y).$$
A rank metric code $C$ is just a subset of $\F_{q}^{m\times n}$. The rate and minimum rank distance of $C$ is defined by $R=\frac{\log_q|C|}{mn}$
and  $d_R(C)=\min\{\rank(X-Y):\; X\neq Y, X,Y\in C\}$, respectively. It is clear that we have $d_R(C)\leq \min\{m,n\}$.

If we fix an $\F_q$-basis of $\F_{q^m}$, each element in $\F_{q^m}$ can be identified with a column vector of $\F_q^m$, and vice versa. Thus, each matrix $X\in \F_q^{m\times n}$ can be identified with a vector $\x\in\F_{q^{m}}^{n}$. For any vectors $\x,\mathbf{y}\in\F_{q^m}^n$, the rank distance between $\x$ and $\mathbf{y}$ is defined by $d_R(\x,\mathbf{y}):=\rank(X-Y),$
where $X,Y$ are the corresponding matrices of $\x,\mathbf{y}$, respectively (note that $d_R(\x,\mathbf{y})$ is independent of the choice of an $\F_q$-basis of $\F_{q^m}$). Hence, a rank metric code $C\in\F_q^{m\times n}$ can be viewed as a block code of length $n$ over $\F_{q^m}$ with rank distance. In the remainder of the paper, a rank metric code is viewed  as a subset of $\F_{q^m}^n$ and we denote by $\rank(\x)$ to be the rank of the corresponding $m\times n$ matrix of $\x$.

Similar to  block codes, one can also derive the Singleton bound as shown below.
\begin{lem} {\rm (Singleton Bound)} \cite{Gabidulin85} Let $C\subseteq \F_{q^m}^{n}$ be a rank metric code with minimum rank distance $d$, then
$$\log_q|C|\leq \min\{n(m-d+1),m(n-d+1)\}.$$
Hence, $\log_q|C|\le m(n-d+1)$ since $n\le m$.\end{lem}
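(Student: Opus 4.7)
My plan is to adapt the classical puncturing argument for the Hamming-metric Singleton bound. The two bounds in the minimum are symmetric (one in the row direction, one in the column direction), so I would prove each by exhibiting an injective projection from $C$ into a smaller matrix space.

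To establish $\log_q|C|\le m(n-d+1)$, I would define the column-projection map $\pi\colon \F_q^{m\times n}\to \F_q^{m\times (n-d+1)}$ that sends a matrix to the submatrix consisting of its first $n-d+1$ columns (equivalently, under the identification with $\F_{q^m}^n$, it discards the last $d-1$ coordinates of the vector). I would then show $\pi|_C$ is injective: if $X,Y\in C$ were distinct codewords with $\pi(X)=\pi(Y)$, then $X-Y$ would have zero columns in its first $n-d+1$ positions, so at most $d-1$ of its columns are nonzero, forcing $\rank(X-Y)\le d-1$. This contradicts $d_R(C)=d$. Hence $|C|\le |\F_q^{m\times(n-d+1)}|=q^{m(n-d+1)}$.

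For the symmetric bound $\log_q|C|\le n(m-d+1)$, I would run the mirror argument with a row-projection $\F_q^{m\times n}\to \F_q^{(m-d+1)\times n}$ keeping the first $m-d+1$ rows. The same observation---deleting $d-1$ rows of a nonzero matrix drops the rank by at most $d-1$---shows injectivity on $C$, and yields $|C|\le q^{n(m-d+1)}$. Taking the minimum of the two exponents gives the claimed bound; the simplification in the last sentence follows from $m(n-d+1)-n(m-d+1)=-(m-n)(d-1)\le 0$ under the standing assumption $n\le m$.

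There is essentially no serious obstacle here: the entire argument rests on the elementary linear-algebra fact that zeroing out $d-1$ columns (or rows) of a matrix cannot leave a matrix of rank $\ge d$. The only point to handle carefully is to note that $\rank$ and $d_R$ are preserved under the basis identification between $\F_q^{m\times n}$ and $\F_{q^m}^n$, so the projection argument, which is most naturally described on matrices, delivers a bound on $|C|$ whether $C$ is regarded as a set of matrices or as a block code over $\F_{q^m}$.
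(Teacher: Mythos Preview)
Your argument is correct. The paper does not supply its own proof of this lemma; it simply quotes the Singleton bound with a reference to Gabidulin's original paper, so there is nothing to compare against. The puncturing argument you outline is exactly the standard proof: projecting onto the first $n-d+1$ columns (respectively $m-d+1$ rows) is injective on $C$ because a difference of codewords supported on at most $d-1$ columns (rows) has rank at most $d-1$, contradicting $d_R(C)=d$. Your final arithmetic check that $m(n-d+1)\le n(m-d+1)$ when $n\le m$ is also correct.
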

A rank metric code $C$  achieving the above Singleton  bound is called maximum rank distance (MRD) code. For example, Gabidulin codes are a class of MRD codes.

One can also define rank metric balls as follows.
\begin{defn} (Rank Metric Ball) For a vector $\x\in\F_{q^m}^n$ and a nonnegative real number $r$, the rank metric ball of center $\x$ with radius $r$ is define by $$B_R(\x,r):=\{\mathbf{y}\in\F_{q^m}^n:\; d_R(\x,\mathbf{y})\leq r\}.$$\end{defn}
It is clear  that the volume of a rank metric ball is independent of the choice of the center $\x$ and depends only on its radius.

Let $N_u(q^m,n)$ denote the number of vectors in $\F_{q^m}^n$ with rank $u$. It is shown in \cite{Gabidulin85}  that \[N_u(q^m,n)=\left\{\begin{array}{ll}
1& \mbox{if $u=0$}\\
\prod\limits_{i=0}^{u-1}\frac{(q^n-q^i)(q^m-q^i)}{q^u-q^i} &\mbox{if $u\geq 1$.}\end{array}\right.\]
Hence, $$|B_R(0,r)|=1+\sum_{u=1}^{r}\prod_{i=0}^{u-1}\frac{(q^n-q^i)(q^m-q^i)}{q^u-q^i}.$$
Furthermore, the volume of a rank metric ball can be bounded as shown below.
 \begin{lem}\label{gad}\cite{Gadouleau08} For $0\leq r\leq \min\{m,n\}$, one has $$q^{r(m+n-r)}< |B_R(0,r)|< K_q^{-1}q^{r(m+n-r)},$$where $K_q=\prod_{j=1}^{\infty}(1-q^{-j})$.\end{lem}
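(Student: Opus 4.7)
The plan is to write $N_u(q^m,n)$ as $q^{u(m+n-u)}$ times a dimensionless factor and then sum. After the substitution $j=u-i$ in the product defining $N_u$ and pulling out the leading powers $q^n,q^m,q^u$ from each parenthesized term, one obtains
$$
N_u(q^m,n) \;=\; q^{u(m+n-u)}\,P_u,\qquad P_u\;:=\;\prod_{j=1}^{u}\frac{(1-q^{-(n-u+j)})(1-q^{-(m-u+j)})}{1-q^{-j}}.
$$
Since $m\ge n\ge u$, both exponents $n-u+j$ and $m-u+j$ are at least $j$, so each numerator factor dominates the denominator individually. The estimate of $|B_R(0,r)|=\sum_{u=0}^{r}N_u(q^m,n)$ thus reduces to carefully bounding the scalar $P_u$.

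For the upper bound, bound each factor of $P_u$ by $1/(1-q^{-j})$ and use $\prod_{j\ge 1}(1-q^{-j})^{-1}=K_q^{-1}$ to get $N_u(q^m,n)\le K_q^{-1}q^{u(m+n-u)}$. Summing over $0\le u\le r$, the exponent $u(m+n-u)$ is increasing on $[0,(m+n)/2]$, and $r\le n\le(m+n)/2$, so the term at $u=r$ dominates. The gaps $r(m+n-r)-u(m+n-u)=(r-u)(m+n-r-u)\ge r-u$ turn the tail into a geometric series in $q^{-1}$, whose contribution is absorbed into the constant $K_q^{-1}$.

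For the lower bound, retain only $N_r=q^{r(m+n-r)}P_r$ in $|B_R(0,r)|$ and argue $P_r\ge 1$. The key step is to pair both numerator ratios against the single denominator: from $(1-q^{-(n-r+j)})(1-q^{-(m-r+j)})\ge 1-q^{-(n-r+j)}-q^{-(m-r+j)}$ together with $q^{-(n-r+j)}+q^{-(m-r+j)}\le q^{-j}$ (which holds when $r<n$, so $n-r\ge 1$ and $m-r\ge 1$ and $q\ge 2$), each factor of $P_r$ is at least one. This gives the claimed strict inequality in the non-degenerate regime; the edge case $r=n$ reduces to the trivial identity $|B_R(0,r)|=q^{mn}=q^{r(m+n-r)}$.

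The main obstacle is the lower bound: a naive one-sided cancellation of the denominator against only one numerator ratio yields merely $P_r\ge\prod_{j=1}^{r}(1-q^{-j})\ge K_q<1$, and hence $|B_R(0,r)|\ge K_q\,q^{r(m+n-r)}$, which misses the claimed constant. Exploiting both numerator factors in concert against the single denominator is what recovers the sharp lower bound $q^{r(m+n-r)}$, and the same factorization makes the upper bound transparent.
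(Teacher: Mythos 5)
The paper offers no proof of this lemma at all---it is quoted directly from \cite{Gadouleau08}---so your argument has to stand on its own. Your factorization $N_u(q^m,n)=q^{u(m+n-u)}P_u$ is correct, and your lower bound goes through: the pairing trick $(1-q^{-(n-r+j)})(1-q^{-(m-r+j)})>1-q^{-(n-r+j)}-q^{-(m-r+j)}\ge 1-q^{-j}$ (valid since $q\ge 2$ and $n-r,\,m-r\ge 1$ when $r<n$) does give $P_r\ge 1$, hence $N_r\ge q^{r(m+n-r)}$, and adding the term $N_0=1$ makes the inequality strict. You also correctly flag that at $r=n$ the bound degenerates to the equality $|B_R(0,n)|=q^{mn}$ (the same happens at $r=0$), so the lemma's strict inequality is really only valid in the interior range.

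The upper bound, however, has a genuine gap at the summation step. From $N_u\le K_q^{-1}q^{u(m+n-u)}$ and the exponent gap $(r-u)(m+n-r-u)\ge r-u$, summing the geometric series gives only $|B_R(0,r)|\le K_q^{-1}(1-q^{-1})^{-1}q^{r(m+n-r)}$, which overshoots the claimed constant by the factor $(1-q^{-1})^{-1}\ge 2$. This loss cannot simply be ``absorbed into $K_q^{-1}$'': if you bound $P_u\le Q_u:=\prod_{j=1}^{u}(1-q^{-j})^{-1}$, the intermediate inequality you would need, namely $\sum_{k=0}^{r}Q_{r-k}\,q^{-k(m+n-2r+k)}<Q_\infty=K_q^{-1}$, is already false when $m=n=r$: the left side is at least $Q_r\bigl(1+(1-q^{-r})q^{-1}\bigr)$ while the right side equals $Q_r\prod_{j>r}(1-q^{-j})^{-1}$, which tends to $Q_r$. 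The lemma is saved only because $P_u$ is in fact much smaller than $Q_u$ for $u$ near $n$ (e.g.\ $P_n\approx K_q$, not $K_q^{-1}$), and exploiting that requires tracking the numerator products carefully. The clean fix is to avoid the sum altogether with a covering argument: every $m\times n$ matrix of rank at most $r$ has its column space inside some $r$-dimensional subspace $U$ of $\F_q^m$, and for fixed $U$ there are exactly $q^{rn}$ matrices whose columns all lie in $U$; hence $|B_R(0,r)|\le \binom{m}{r}_q\, q^{rn}$, where $\binom{m}{r}_q=q^{r(m-r)}\prod_{j=1}^{r}\frac{1-q^{-(m-r+j)}}{1-q^{-j}}<K_q^{-1}q^{r(m-r)}$ is the Gaussian binomial coefficient. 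This yields $|B_R(0,r)|<K_q^{-1}q^{r(m+n-r)}$ in one stroke, applying your $K_q$ estimate to a single product rather than to every term of a sum.
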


It is easy to see that $ K_q$ is an increasing function of $q$. In  \cite{Helleseth95}, it is shown that $K_2\approx 0.2887$. Hence, we have $K_q^{-1}\leq K_2^{-1}<4$.

Since rank metric defines a distance, we have the Hamming and Gilbert-Varsharmov like bounds as well.\

\begin{lem} {\rm (Hamming Bound)} \cite{Gadouleau06}  Let $C\subseteq \F_{q^m}^{n}$ be a rank metric code with minimum rank distance $d$, then
$$\left|B_R\left(0,\frac{d-1}{2}\right)\right|\leq \frac{q^{mn}}{|C|}.$$\end{lem}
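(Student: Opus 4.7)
The statement is the classical sphere-packing (Hamming) bound, and the plan is to mimic the familiar Hamming argument for block codes, with the Hamming metric replaced by the rank metric. The only properties of the rank metric that are really needed are (a) that $d_R$ is a genuine metric on $\F_{q^m}^n$, so that the triangle inequality holds, and (b) that the volume $|B_R(\x,r)|$ of a rank metric ball depends only on $r$, which has already been noted right after the definition of $B_R$.

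The plan is as follows. First I would verify, or simply invoke, that $d_R$ satisfies the triangle inequality: for $\x,\y,\mathbf{z}\in\F_{q^m}^n$ with corresponding matrices $X,Y,Z$, one has $\rank(X-Z)\le \rank(X-Y)+\rank(Y-Z)$, because the column space of $X-Z=(X-Y)+(Y-Z)$ is contained in the sum of the column spaces of $X-Y$ and $Y-Z$. This gives $d_R(\x,\mathbf{z})\le d_R(\x,\y)+d_R(\y,\mathbf{z})$. Next, set $t=\lfloor (d-1)/2\rfloor$ and consider the family of balls $\{B_R(\c,t)\}_{\c\in C}$. I would show these are pairwise disjoint: if $\c_1\ne \c_2$ in $C$ and some $\y$ lay in both $B_R(\c_1,t)$ and $B_R(\c_2,t)$, then by the triangle inequality $d_R(\c_1,\c_2)\le 2t\le d-1$, contradicting the hypothesis that the minimum rank distance of $C$ equals $d$.

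Once disjointness is established, the bound follows by a volume count. Since all the balls sit inside the ambient space $\F_{q^m}^n$ of size $q^{mn}$, and since each ball has the same cardinality $|B_R(0,t)|$ by the translation invariance remark following the definition of $B_R$, summing gives
\begin{equation*}
|C|\cdot \left|B_R\!\left(0,\tfrac{d-1}{2}\right)\right| \;=\; \sum_{\c\in C}|B_R(\c,t)| \;\le\; q^{mn},
\end{equation*}
which rearranges to the stated inequality. (If $d$ is even one should really write $t=\lfloor (d-1)/2\rfloor$ throughout, but $|B_R(0,(d-1)/2)|$ is naturally read as $|B_R(0,t)|$.)

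There is no real obstacle here: the only content beyond bookkeeping is the triangle inequality for rank, and once that is in hand the proof is the standard packing argument. The statement does not even need the refined volume estimates of Lemma \ref{gad} or any property specific to the rank metric beyond the metric axioms and translation invariance of ball sizes.
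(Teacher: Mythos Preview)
Your argument is correct and is exactly the classical sphere-packing proof: disjointness of the balls $B_R(\c,\lfloor(d-1)/2\rfloor)$ via the triangle inequality for rank, followed by a volume count using translation invariance of $|B_R(\cdot,r)|$. The paper itself does not supply a proof of this lemma---it simply states the bound and cites \cite{Gadouleau06}---so there is no in-paper argument to compare against; your write-up is the standard one and would serve perfectly well.
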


A rank metric code $C$  achieving the above Hamming  bound is called a perfect code. However, unlike classical codes, it is shown in \cite{babu95} that  there exist no perfect rank metric codes.

There are both finite and asymptotic versions of the covering bound (i.e., the Gilbert-Varsharmov  bound) for rank metric codes. In this paper, we only state the  asymptotic versions of the Gilbert-Varsharmov  bound. The reader  may refer to  \cite{Gadouleau08} for the finite version of the Gilbert-Varsharmov  bound.

Define $A_{q^m}(n,d)$ to be the maximum cardinality of rank metric codes with rank distance $d$ in $\F_{q^m}^{n}$, i.e.,
$$A_{q^m}(n,d):=\max\{|C|\;:\;C\subseteq \F_{q^m}^{n}\;\mbox{ is a rank metric code }\;\mbox{with}\;d_R(C)=d\}.$$

Furthermore, assume that $n/m$ tends to a positive constant $b$ as $n\rightarrow\infty$. For a real $\delta\in(0,1)$, define
$$ \alpha(\delta):=\limsup_{n\rightarrow\infty}\frac{\log_qA_{q^m}(n,\lfloor\delta n\rfloor)}{mn}.$$
Then we have the following asymptotic Gilbert-Varsharmov  bound.
\begin{lem} {\rm (Gilbert-Varsharmov  Bound)} \cite{Gadouleau08}  For all real numbers  $b$ and $\delta$ satisfying $0\leq\delta\leq \min\{1, b^{-1}\}$, one has $$\alpha(\delta)\geq (1-\delta)(1-b\delta).$$\end{lem}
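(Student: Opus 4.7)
The plan is to prove the bound by the classical greedy covering argument, exactly as one does for Hamming-metric GV, now using the rank-metric ball volume estimate from Lemma \ref{gad}.

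First I would establish a finite-length version. Build a code $C \subseteq \F_{q^m}^n$ greedily: start with any vector, and at each step add a new codeword $\c$ at rank distance $\geq d$ from every previously chosen codeword, continuing while such a $\c$ exists. When the process terminates, the rank balls $B_R(\c, d-1)$ for $\c \in C$ must cover all of $\F_{q^m}^n$, for otherwise any uncovered vector could be added. Hence
\[
|C| \cdot |B_R(0, d-1)| \;\geq\; q^{mn},
\]
so $A_{q^m}(n,d) \geq q^{mn}/|B_R(0,d-1)|$.

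Next I would plug in the upper bound $|B_R(0, d-1)| < K_q^{-1} q^{(d-1)(m+n-d+1)}$ from Lemma \ref{gad}, giving
\[
\log_q A_{q^m}(n,d) \;\geq\; mn - (d-1)(m+n-d+1) + \log_q K_q.
\]
Dividing by $mn$ with $d = \lfloor \delta n \rfloor$, the additive constant $\log_q K_q / (mn)$ vanishes as $n \to \infty$, and the main term satisfies
\[
\frac{(d-1)(m+n-d+1)}{mn} \;\longrightarrow\; \delta\left(1 + b - b\delta\right)
\]
as $n \to \infty$ with $n/m \to b$. Therefore
\[
\alpha(\delta) \;\geq\; 1 - \delta(1 + b - b\delta) \;=\; (1-\delta)(1 - b\delta),
\]
which is the claimed bound. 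The constraint $0 \leq \delta \leq \min\{1, b^{-1}\}$ is just what guarantees $d = \lfloor \delta n \rfloor \leq \min\{m,n\}$ so that Lemma \ref{gad} applies.

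There is no real obstacle here; the whole argument is two lines of greedy covering plus the volume bound. The only thing to watch is the asymptotic arithmetic: keeping track of the factor $n/m \to b$ so that $\delta(m + n - d)/m$ converges to $\delta(1 + b - b\delta)$ rather than to $\delta(1+b)$, and confirming that the $K_q$ factor (being independent of $n,m$) contributes nothing in the limit. Everything else is immediate from the stated lemmas.
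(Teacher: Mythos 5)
Your greedy covering argument is correct: the terminal code's radius-$(d-1)$ balls cover $\F_{q^m}^n$, the volume upper bound of Lemma \ref{gad} then gives $\log_q A_{q^m}(n,d)\ge mn-(d-1)(m+n-d+1)+\log_q K_q$, and the asymptotics with $d=\lfloor\delta n\rfloor$, $n/m\to b$ yield $\alpha(\delta)\ge 1-\delta(1+b-b\delta)=(1-\delta)(1-b\delta)$. The paper states this lemma without proof, citing Gadouleau--Yan, and your argument is essentially the standard one given there, so there is nothing to correct.
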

\begin{rem}{\rm \begin{itemize}\item[(i)] Again, unlike in classical coding theory, the  quality $A_{q^m}(n,d)$ and the function $\alpha(\delta)$ are completely determined in the case of rank metric codes, i.e., $\log_qA_{q^m}(n,d)=m(n-d+1)$ and $\alpha(\delta)=1-\delta$.
\item[(ii)] In view of the above result in (i), the Gilbert-Varsharmov  bound does not make sense in determining the function $\alpha(\delta)$ . However, it does make sense in our list decoding of linear random rank metric codes in Section 4.
\end{itemize}
}\end{rem}

\subsection{List Decoding}
In this subsection, we introduce some notations for list decoding of rank metric codes and show a result on constrains of list decoding radius.
\begin{defn} ($(\rho, L)$-list decodable) A rank metric code $C\subseteq\F_{q^m}^{n}$ is said to be $(\rho, L)$-list decodable if for every $\x\in\F_{q^m}^{n}$, we have $$|B_R(\x,\rho n)\cap C|\leq L.$$\end{defn}
\begin{defn} (Decoding Radius) For an integer $L\geq 1$ and a rank metric code $C\subseteq\F_{q^m}^{n}$, the normalized list-of-L decoding radius of $C$, denoted $\rho_L(C)$, is defined by
$$\rho_L(C):=\max\{s/n|\;C\; \mbox{is}\; (s/n, L)-\mbox{list decodable}, s\in\mathbb{Z}_{\geq 0} \}.$$
\end{defn}
The following proposition shows an upper bound on decoding radius of rank metric codes.

\begin{prop} Let $m,n, L$ be positive integers satisfying   $L=O({\rm poly}(mn))$. Then for any $R\in(0,1)$ and $\rho\in (0,1)$, a $(\rho,L)$-list decodable code $C\subseteq\F_{q^m}^n$  with rate $R$ must obey $$\rho\leq 1-R.$$\end{prop}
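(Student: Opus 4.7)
The plan is to use a standard double-counting / averaging argument, just as one does for the Singleton bound in list decoding over the Hamming metric, and then convert the resulting volume inequality into a bound on $\rho$ using the ball-volume estimate from Lemma \ref{gad}.

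First, I would pick a center $\x \in \F_{q^m}^n$ uniformly at random and look at the expected number of codewords in $B_R(\x, \rho n)$:
\begin{equation*}
\mathbb{E}_{\x}\,|B_R(\x,\rho n) \cap C| \;=\; \frac{|C|\cdot |B_R(0,\rho n)|}{q^{mn}},
\end{equation*}
since for each fixed codeword $\c$ the probability that $\c \in B_R(\x,\rho n)$ is $|B_R(0,\rho n)|/q^{mn}$ (the ball volume is translation-invariant). By the $(\rho,L)$-list-decodability hypothesis, this expectation is at most $L$, so
\begin{equation*}
|C|\cdot |B_R(0,\rho n)| \;\leq\; L\cdot q^{mn}.
\end{equation*}

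Next, I would insert the lower bound $|B_R(0,\rho n)| > q^{\rho n(m+n-\rho n)}$ from Lemma \ref{gad} and use $|C| = q^{Rmn}$. Taking logarithms and dividing by $mn$ gives
\begin{equation*}
R \;+\; \rho\!\left(1 + (1-\rho)\tfrac{n}{m}\right) \;<\; 1 \;+\; \frac{\log_q L}{mn}.
\end{equation*}
Because $1 + (1-\rho)n/m \geq 1$, the left-hand side is at least $R+\rho$, so
\begin{equation*}
\rho \;<\; 1 - R \;+\; \frac{\log_q L}{mn}.
\end{equation*}

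Finally, since $L = O(\mathrm{poly}(mn))$, the error term $\log_q L/(mn)$ is $O(\log(mn)/(mn))$, which tends to $0$ as $mn \to \infty$. Hence for sufficiently large $mn$ one obtains $\rho \leq 1 - R$, which is the desired bound. I do not expect any real obstacle here: the proof is essentially a one-line averaging argument combined with the ball-volume estimate, and the only mild care needed is in noting that the $n/m$ factor in the exponent only helps (it cannot reduce the effective radius below $\rho$), so no assumption on the aspect ratio $n/m$ is required.
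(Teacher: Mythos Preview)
Your proposal is correct and takes essentially the same approach as the paper: both use the averaging/double-counting identity $\sum_{\x}|B_R(\x,\rho n)\cap C| = |C|\cdot|B_R(0,\rho n)|$ together with the ball-volume lower bound of Lemma~\ref{gad}. The only cosmetic difference is that the paper argues by contradiction via the Pigeonhole Principle (if $\rho>1-R$ then some list has size $\geq q^{mn(R+\rho-1)}$, which is exponential), whereas you phrase it directly as an expectation bound yielding $\rho < 1-R + \frac{\log_q L}{mn}$; the content is identical.
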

\begin{proof}
 Let $C$ be a rank metric code in $\F_{q^m}^n$ with rate $R$.    We claim that if $\rho>1-R$, then there exists a vector $\x\in\F_{q^m}^n$ such that
 the list set $ B_R(\x,\rho n)\cap C $ of $\x$ has size at least $\Omega(\exp(nm))$.

For a codeword $\c\in C$, if $\y\in B_R(\c,\rho n)$, then $\c\in(B_R(\y, \rho n)\cap C)$. Thus
$$\sum_{\y\in \F_{q^m}^{n}}|B_R(\y,\rho n)\cap C|\geq |C||B_R(0,\rho n)|\geq q^{Rmn+\rho n(m+n-\rho n)},$$
where the last inequality follows from Lemma \ref{gad}.

By the Pigeonhole Principle, there exists $\x\in \F_{q^m}^{n}$ such that
\begin{equation}\label{3}|B_R(\x,\rho n)\cap C|\geq \frac{q^{Rmn+\rho n(m+n-\rho n)}}{q^{mn}}\geq q^{nm(R+\rho-1)}.\end{equation}
 Thus, we have $|B_R(\x,\rho n)\cap C|=\Omega(\exp(nm))$ if $\rho>1-R$. This completes the proof.\end{proof}
\section{Random Rank Metric Codes}
This section is devoted to the list decodabilty of random rank metric codes. More precisely speaking, we show that for $n/m\le \epsilon$, almost all rank metric codes of rate $R$ in $\F_{q^m}^{ n}$ are  $(1-R-\epsilon, O(1/\epsilon))$-list decodable; while for $n/m\geq \Theta_R(\epsilon)$, no rank metric codes of rate $R$ and radius $1-R-\epsilon$ in $\F_{q^m}^{ n}$ have polynomial list size.

\begin{thm}\label{thm1}For every $\epsilon > 0$ and  $0< R< 1$, with high probability a random rank metric code in $\F_{q^m}^n$ with rate $R$  is $(1-R-\epsilon, O(1/\epsilon))$-list decodable for all sufficiently large $m,n$ with $n/m\le \epsilon$.  \end{thm}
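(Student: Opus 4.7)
My plan is to run a standard random coding argument with a union bound, where the hypothesis $n/m \leq \epsilon$ plays the essential role of neutralising a quadratic-in-$n$ correction that is particular to the rank-metric ball volume. First I would sample a random code $C = \{\c_1,\dots,\c_N\}$ by drawing $N = \lceil q^{Rmn}\rceil$ codewords independently and uniformly from $\F_{q^m}^n$; duplicates are negligible and do not affect the rate asymptotically. Set $\rho = 1-R-\epsilon$ and take $L$ to be the smallest integer exceeding $4/(3\epsilon)$, so $L = O(1/\epsilon)$.

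\textbf{Single-center estimate.} For a fixed $\x \in \F_{q^m}^n$, Lemma \ref{gad} gives
\[
p \;:=\; \frac{|B_R(\x,\rho n)|}{q^{mn}} \;\leq\; K_q^{-1}\, q^{\rho n(m+n-\rho n)\,-\,mn},
\]
so
\[
pN \;\leq\; K_q^{-1}\, q^{(\rho-(1-R))mn + \rho(1-\rho)n^2} \;=\; K_q^{-1}\, q^{-\epsilon mn + \rho(1-\rho)n^2}.
\]
The hypothesis $n/m \leq \epsilon$ together with $\rho(1-\rho) \leq 1/4$ yields $\rho(1-\rho)n^2 \leq \epsilon mn/4$, hence $pN \leq K_q^{-1}\, q^{-3\epsilon mn/4}$. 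Since the $N$ codewords are i.i.d., the probability that more than $L$ of them fall in $B_R(\x,\rho n)$ is at most $\binom{N}{L+1}p^{L+1} \leq (pN)^{L+1}/(L+1)!$.

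\textbf{Union bound and conclusion.} Summing over all $q^{mn}$ possible centers $\x \in \F_{q^m}^n$,
\[
\Pr\bigl[C\ \text{is not}\ (\rho,L)\text{-list decodable}\bigr] \;\leq\; \frac{(K_q^{-1})^{L+1}}{(L+1)!}\, q^{\,mn \,-\, 3\epsilon(L+1)mn/4}.
\]
With $L+1 > 4/(3\epsilon)$, the exponent of $q$ is a negative constant times $mn$, so the failure probability is $q^{-\Omega(mn)} = o(1)$ as $mn \to \infty$. This proves that a random code of rate $R$ is $(\rho,L)$-list decodable with high probability.

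\textbf{Main obstacle.} The only genuinely nontrivial point is the $\rho(1-\rho)n^2$ correction in the exponent, which does not appear in the analogous Hamming-metric argument. It comes from the $(q^n - q^i)$ factors in the formula for $N_u(q^m,n)$ and reflects the fact that a rank-metric ball is substantially larger than a Hamming-metric ball of comparable relative radius when $n$ is comparable to $m$. Without the regime $n/m \leq \epsilon$, this term would swamp the $-\epsilon mn$ gain extracted from $R+\rho = 1-\epsilon$, and the random coding bound would collapse; the hypothesis is exactly what forces the correction below $\epsilon mn/4$. The list-size bound $L = O(1/\epsilon)$ then falls out of balancing the $q^{mn}$ from the union bound over centers against the $q^{-3\epsilon mn(L+1)/4}$ from the single-center estimate.
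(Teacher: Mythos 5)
Your proposal is correct and follows essentially the same route as the paper: sample a rate-$R$ random code, bound the ball volume via Lemma \ref{gad}, union-bound over all $q^{mn}$ centers and all $(L+1)$-subsets of codewords, and use $n/m\le\epsilon$ to absorb the $\rho(1-\rho)n^2$ term into the $-\epsilon mn$ gain. The only differences are cosmetic (i.i.d.\ sampling with a $\binom{N}{L+1}$ factor rather than a size-$M$ random subset with an $M^{L+1}$ factor, and the slightly sharper constant $L\approx 4/(3\epsilon)$ versus the paper's $4/\epsilon$ from using $\rho(1-\rho)\le 1/4$).
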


\begin{proof}  Pick a rank metric code $C\subseteq \F_{q^m}^n$  with size $M=|C|=q^{Rmn}$   uniformly at random. Letting $L+1=\frac{4}{\epsilon}, \rho=1-R-\epsilon$. Next we calculate the probability that $C$ is not $(\rho,L)$-list decodable.

If $C$ is not $(\rho,L)$-list decodable, then there exists $\x\in\F_{q^m}^n$, and a subset $S\subseteq C$ with $|S|=L+1$ such that $S\subseteq B_R(\x,\rho n)$.
For a fixed $\x$, the probability that one codeword of $C$ is contained in  $B_R(\x,\rho n)$ is at most $\frac{|B_R(\x,\rho n)|}{q^{mn}}$. For a subset $S\subseteq C$ with $|S|=L+1$, the probability of $S\subseteq B_R(\x,\rho n)$ is at most $\left(\frac{|B_R(\x,\rho n)|}{q^{mn}}\right)^{L+1}$. Furthermore, the number of subsets $S\subseteq C$ with $|S|=L+1$ is at most $M^{L+1}$. By the union bound, $C$ is not $(\rho, L)$-list decodable with probability at most
\begin{eqnarray*}q^{mn}M^{L+1}\left(\frac{|B_R(\x,\rho n)|}{q^{mn}}\right)^{L+1}&=&q^{mn}q^{Rmn(L+1)}K_q^{-(L+1)}q^{(L+1)[\rho n(m+n-\rho n)-mn]} \\
&< & q^{mn}q^{(L+1)[-\epsilon mn+\rho(1-\rho)n^2+\log_q 4]}\\
&\leq & q^{mn}q^{(L+1)(-\epsilon mn+\frac{n^2}{2})}\quad( n\;\mbox{sufficiently large})\\
&\leq &q^{mn}q^{(L+1)(-\frac{\epsilon}{2} mn)}\quad(\mbox{since}\; n\leq \epsilon m)\\
&\leq & q^{-mn}. \end{eqnarray*}
This completes the proof.\end{proof}
\begin{rem}\label{rem1}\begin{itemize}\item[(1)]Using some probabilistic arguments, we can easily show the following: a random rank metric code with rate $R$ is $(1-R-\epsilon, L)$-list decodable, then with high probability the list size $L\geq 4/\epsilon$. In other words, for a random $(1-R-\epsilon, L)$-list decodable rank metric code of rate $R$, when $\epsilon\rightarrow 0$, the list size $L$ tends to $\infty$ with high probability.
\item[(2)]Furthermore, the ratio of $n/m=\epsilon$ in Theorem \ref{thm1} is optimal in magnitude. The next theorem shows that there exist no rank metric codes of rate $R$ and radius $1-R-\epsilon$ in $\F_{q^m}^{ n}$ with polynomial list size for $n/m\ge \Theta_R(\epsilon)$, i.e., no polynomial-time list decoding algorithm is possible in this case.\end{itemize}\end{rem}

\begin{thm}For every $\epsilon >0$ and $0<R<1$. Let $C\in\F_{q^m}^n$ be a rank metric code with rate $R$ and list decoding radius $1-R-\epsilon$, then the maximal list set size of $C$ is at least $\Omega({\rm exp}(n))$ for all sufficiently large $n,m$ with $ \frac{n}{m}\geq \Theta_R(\epsilon)$, where $\Theta_R(\epsilon)=\frac{2\epsilon}{(1-R-\epsilon)(R+\epsilon)}$.
\end{thm}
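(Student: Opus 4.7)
I would mimic the pigeonhole argument from the preceding Proposition, but retain the quadratic-in-$n$ term in the ball-volume exponent that was discarded there. Concretely, double-counting the pairs $(\y,\c) \in \F_{q^m}^n \times C$ with $\y \in B_R(\c,\rho n)$ yields
$$\sum_{\y \in \F_{q^m}^n} |B_R(\y,\rho n) \cap C| \;=\; |C|\cdot|B_R(0,\rho n)|,$$
so by the Pigeonhole Principle some $\x \in \F_{q^m}^n$ satisfies $|B_R(\x,\rho n) \cap C| \geq |C|\cdot|B_R(0,\rho n)|/q^{mn}$.

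Next I would plug in the full lower bound $|B_R(0,\rho n)| > q^{\rho n(m+n-\rho n)}$ from Lemma \ref{gad}, together with $|C| = q^{Rmn}$ and $\rho = 1-R-\epsilon$. This produces
$$|B_R(\x,\rho n) \cap C| \;>\; q^{(R+\rho-1)mn \,+\, \rho(1-\rho)n^2} \;=\; q^{-\epsilon mn \,+\, (1-R-\epsilon)(R+\epsilon)n^2}.$$
The hypothesis $n/m \geq \Theta_R(\epsilon) = \frac{2\epsilon}{(1-R-\epsilon)(R+\epsilon)}$ is equivalent to $\epsilon m \leq \tfrac{1}{2}(1-R-\epsilon)(R+\epsilon)\,n$, so the exponent above is at least $\tfrac{1}{2}(1-R-\epsilon)(R+\epsilon)\,n^2$. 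Hence the maximal list size is at least $q^{(1-R-\epsilon)(R+\epsilon)n^2/2}$, which is comfortably $\Omega(\exp(n))$.

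There is no real obstacle; the argument is a sharper bookkeeping of the preceding Proposition's proof. The only conceptual point is where the factor $2$ in $\Theta_R(\epsilon)$ comes from: it is exactly the slack needed for the quadratic gain $(1-R-\epsilon)(R+\epsilon)n^2$ to absorb the penalty $\epsilon mn$ and still leave a positive fraction of itself behind, guaranteeing an exponential (rather than merely unbounded) list size. Since the argument uses nothing beyond the cardinality $|C| = q^{Rmn}$, the conclusion applies to every rank metric code of the prescribed rate and aspect ratio, as required.
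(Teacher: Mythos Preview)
Your proposal is correct and follows essentially the same pigeonhole-plus-ball-volume argument as the paper's proof. The only difference is in the final arithmetic: the paper factors the exponent as $n[-\epsilon m + (1-R-\epsilon)(R+\epsilon)n]$ and uses the side condition $m>1/\epsilon$ to conclude the bracket is at least $1$, whereas you directly show the exponent is at least $\tfrac{1}{2}(1-R-\epsilon)(R+\epsilon)n^2$, which is a slightly cleaner and in fact stronger estimate.
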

\begin{proof}Fix $\epsilon\in(0, 1)$, let $m\geq 1/\epsilon$ be an integer. Let $C\subseteq\F_{q^m}^n$ be a random rank metric code with rate $R$.  Put $\rho=1-R-\epsilon$, we claim that if $n/m\geq\frac{2\epsilon}{(1-R-\epsilon)(R+\epsilon)}$, then there exists a vector $\x\in\F_{q^m}^n$ such that $|B_R(\x,\rho n)\cap C |$ is at least $\Omega({\rm exp}(n))$.

For any codeword $\c\in C$, if $\y\in B_R(\c,\rho n)$, then $\c\in B_R(\y, \rho n)\cap C$. Thus
$$\sum_{\y\in \F_{q^m}^{n}}|B_R(\y,\rho n)\cap C|\geq |C||B_R(0,\rho n)|\geq q^{Rmn+\rho n(m+n-\rho n)},$$
where the last inequality follows from Lemma \ref{gad}.

By the Pigeonhole principle, there exists $\x\in \F_{q^m}^{n}$, such that
\begin{equation}\label{1}|B_R(\x,\rho n)\cap C|\geq \frac{q^{Rmn+\rho n(m+n-\rho n)}}{q^{mn}}=q^{n[-\epsilon m+(1-R-\epsilon)(R+\epsilon)n]}.\end{equation}
Since $n\geq \frac{2\epsilon}{(1-R-\epsilon)(R+\epsilon)}m$ and $m>1/\epsilon$, we have $-\epsilon m+(1-R-\epsilon)(R+\epsilon)n\geq 1$, then (\ref{1}) shows that $|B_R(\x,\rho n)\cap C |$ is at least $\Omega({\rm exp}(n))$. This completes the proof.\end{proof}

\section{Random $\F_q$-linear Rank metric codes}
Inspired by the work of Guruswami et.al.\cite{guru10}, we consider the list decodability of random $\F_q$-linear rank metric codes where the code is viewed as an $\F_q$-linear subspace of $\F_{q^{m}}^n$. We show that if $n/m$ tends to a constant $b$ as $n\rightarrow\infty$, almost all random $\F_q$-linear rank metric code with rate $R$ and decoding radius $\rho$ satisfy the asymptotic Gilbert-Varsharmov bound. Now, we introduce the notations of $\F_q$-linear rank metric codes as follows.

\begin{defn}An $\F_q$-linear rank metric code $C\in\F_{q^m}^n$ is an $\F_q$-linear subspace of $\F_{q^{m}}^n$. The rate of $C$ is $R=\frac{\log_q|C|}{mn}$.\end{defn}

Let $\lim_{n\rightarrow\infty}\frac{n}{m}=b$ be a constant. Now we want to show that for random $\F_q$-linear rank metric codes, with high probability, the decoding radius $\rho$ and the rate $R$ achieving the Gilbert-Varsharmov bound. Our main result is as follows.

\begin{thm}\label{thm3}Let $b=\lim_{n\rightarrow \infty}\frac{n}{m}$ be a constant. For any $\epsilon>0$ and $0< \rho< 1$, let $R=(1-\rho)(1-b\rho)-\epsilon$, then with high probability a random $\F_q$-linear rank metric code $C\subseteq\F_{q^m}^n$ of rate $R$ is $(\rho,O({\rm exp}(1/\epsilon)))$-list decodable for all sufficiently large $n, m$.\end{thm}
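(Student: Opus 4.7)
The plan is a union-bound argument for random $\F_q$-linear codes, adapted to the rank metric. Set $L+1 := q^{\lceil 3/\epsilon\rceil}$, so $L = O(\exp(1/\epsilon))$. Let $k := Rmn$ denote the $\F_q$-dimension of the random subspace $C\subseteq\F_{q^m}^n$; then for every fixed $t$-dimensional $\F_q$-subspace $W\subseteq\F_{q^m}^n$, the standard random-subspace count $\binom{k}{t}_q/\binom{mn}{t}_q$ gives $\Pr[W\subseteq C]\le K_q^{-1}\,q^{-t(mn-k)} = K_q^{-1}\,q^{-t(1-R)mn}$. The bad event---that $C$ is not $(\rho,L)$-list decodable---is the existence of some $\x\in\F_{q^m}^n$ and an $(L+1)$-subset $V\subseteq C\cap B_R(\x,\rho n)$; since $V\subseteq C$ iff $\mathrm{span}_{\F_q}(V)\subseteq C$, this probability is at most $K_q^{-1}\,q^{-t(1-R)mn}$ where $t := \dim_{\F_q}\mathrm{span}(V)\ge s_0 := \lceil\log_q(L+1)\rceil$ (forced by $|V|=L+1\le q^t$).

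I would union-bound over $\x$ (contributing $q^{mn}$) and then over $(L+1)$-subsets $V\subseteq B_R(\x,\rho n)$, grouping by $t$. The number of such $V$ of span-dimension exactly $t$ is at most (number of $t$-dimensional $\F_q$-subspaces of $\F_{q^m}^n$)$\times\binom{q^t}{L+1}\le q^{t(mn-t)}\cdot q^{t(L+1)}/(L+1)!$. Substituting yields
\[ \Pr[\text{bad}] \;\le\; \frac{C\cdot q^{mn}}{(L+1)!}\sum_{t\ge s_0} q^{\,t(L+1-t-\delta)}, \qquad \delta := (1-R)mn - \log_q |B_R(0,\rho n)|, \]
for an absolute constant $C$. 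Lemma \ref{gad} together with $R=(1-\rho)(1-b\rho)-\epsilon$ and $n/m\to b$ gives the key estimate $\delta\ge \epsilon mn - o(mn)$: indeed, $(1-R)mn - \rho n(m+n-\rho n) = mn[\rho(1-\rho)(b-n/m) + \epsilon]$, which is $\epsilon mn - o(mn)$ as $n/m\to b$. This is the sole use of the rank-metric Gilbert--Varshamov-type volume bound.

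For $mn$ large enough that $\epsilon mn$ dominates the (bounded) constant $L+1 = q^{s_0}$, each summand is at most $q^{-t\epsilon mn/2 + o(tmn)}$, so the geometric tail is at most $q^{-s_0\epsilon mn/2 + o(mn)}$. With $s_0\ge 3/\epsilon$ this gives $q^{-3mn/2+o(mn)}$; multiplying by the $q^{mn}$ factor from centers yields $\Pr[\text{bad}]\le q^{-mn/2+o(mn)}\to 0$, proving the theorem. The main obstacle---and the reason for the exponential-in-$1/\epsilon$ list size---is the near-cancellation between the counting factor $q^{t(mn-t)}$ and the probability factor $q^{-t(1-R)mn}$: only the $\epsilon$-slack $\delta$ survives in the exponent, and this decay must be paid roughly $s_0\approx\log_q(L+1)$ times in order to absorb the $q^{mn}$ union bound over centers, which forces $L = q^{\Theta(1/\epsilon)} = O(\exp(1/\epsilon))$. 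Sharpening this to $L = \mathrm{poly}(1/\epsilon)$ would require replacing the crude union bound by a more refined moment or potential-function argument in the style of \cite{guru10}.
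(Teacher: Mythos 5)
Your overall strategy is essentially the paper's: a union bound over centers $\x$ (cost $q^{mn}$), a reduction of the event ``$L+1$ codewords of $C$ lie in $B_R(\x,\rho n)$'' to the event that some $t$-dimensional $\F_q$-subspace spanned by ball elements lies inside the random code, with $t\ge\log_q(L+1)=\Theta(1/\epsilon)$ forced by $|V|\le q^t$, the estimate $\Pr[W\subseteq C]\le K_q^{-1}q^{-t(1-R)mn}$, and Lemma~\ref{gad} to extract the slack $\delta\ge\epsilon mn-o(mn)$ from $R=(1-\rho)(1-b\rho)-\epsilon$. The paper packages the same computation slightly differently (it first translates to $B_R(0,\rho n)\cap\mathrm{Span}_{\F_q}(C,\x)$ and counts $r$-tuples of linearly independent ball vectors via $|\mathcal F_r|\le|B_R(0,\rho n)|^r$), but the exponent arithmetic, the role of the GV-type volume bound, and the reason the list size is $q^{\Theta(1/\epsilon)}$ are identical; your observation that $V\subseteq C$ iff $\mathrm{span}_{\F_q}(V)\subseteq C$ even lets you skip the coset/$C^*$ gymnastics.

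There is, however, one step that does not hold as written, and it is the crux of the argument. You bound the number of $(L+1)$-subsets $V\subseteq B_R(\x,\rho n)$ with $\dim\mathrm{span}(V)=t$ by $(\hbox{number of all $t$-dimensional subspaces of }\F_{q^m}^n)\times\binom{q^t}{L+1}\le q^{t(mn-t)}\cdot q^{t(L+1)}/(L+1)!$. Substituting \emph{that} count into the union bound gives an exponent $mn+t\bigl(Rmn+L+1-t\bigr)$, which is large and positive; the displayed inequality involving $\delta=(1-R)mn-\log_q|B_R(0,\rho n)|$ does not follow from it, and the proof would collapse. What your $\delta$-formula silently assumes --- and what you must say --- is that the only $t$-dimensional subspaces that can arise as $\mathrm{span}(V)$ are those generated by $t$ linearly independent vectors of $B_R(\x,\rho n)$, so their number is at most $|B_R(\x,\rho n)|^t=|B_R(0,\rho n)|^t\le\bigl(K_q^{-1}q^{\rho n(m+n-\rho n)}\bigr)^t$, which is far smaller than $q^{t(mn-t)}$ when $\rho<1$; this is precisely the paper's bound $|\mathcal F_r|\le|B_R(0,\rho n)|^r$. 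With that correction the remainder of your estimate (the threshold $t\ge\lceil 3/\epsilon\rceil$, the computation $\delta\ge\epsilon mn-o(mn)$ as $n/m\to b$, and the geometric tail beating the $q^{mn}$ factor from the centers) goes through and yields the theorem.
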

\begin{proof} Fix $\epsilon\in(0, 1)$, let $m=\lceil4/\sqrt{b\epsilon}\rceil, n=bm$. Let $C$ be a random $\F_q$-linear subspace of $\F_{q^m}^n$ with $\dim_{\F_q}C=Rnm$. Put $l=\lceil4/\epsilon\rceil, L=q^l$. Next we calculate the probability that $C$ is not $(\rho, L)$-list decodable.

If $C$ is not $(\rho, L)$-list decodable, then there exists $\x\in\F_{q^m}^n$ such that $|B_R(\x,\rho n)\cap C|\geq L$. We claim that $C$ is not $(\rho, L)$-list decodable with probability at most $q^{-nm/2}$, i.e.,
\begin{equation}\label{5.1}Pr[\exists \x\in\F_{q^m}^n\;s.t. \;|B_R(\x,\rho n)\cap C|\geq L]< q^{-nm/2}.\end{equation}

Let $\x\in\F_{q^m}^n$ be picked uniformly at random, define
$$\Delta:=Pr_{\x}[|B_R(\x,\rho n)\cap C|\geq L].$$
To prove inequality (\ref{5.1}), it suffices to show that $$\Delta< q^{-nm/2}q^{-mn}.$$

Since $C$ is $\F_q$-linear, we have
\begin{eqnarray*}\Delta&=&Pr_{\x}[|B_R(\x,\rho n)\cap C|\geq L]\\
&=&Pr_{\x}[|B_R(0,\rho n)\cap (C+\x)|\geq L]\\
&\leq&Pr_{\x}[|B_R(0,\rho n)\cap Span_{\F_q}(C,\x)|\geq L]\\
&\leq&Pr_{C^*}[|B_R(0,\rho n)\cap C^*|\geq L],\end{eqnarray*}
where $C^*$ is a random $\F_q$-subspace of $\F_{q^m}^n$ with dimension $Rnm+1$ containing $C$ (If $\x\notin C$, then $C^*=Span_{\F_q}(C,\x)$; otherwise $C^*=Span_{\F_q}(C, \y)$, where $\y$ is picked randomly from $\F_{q^m}^n\setminus C$).

For each integer $r$ with $l\leq r\leq L$, let $\mathcal{F}_r\subseteq B_R(0,\rho n)^r$ be the set of all $r$-tuple $(\v_1,\v_2,..,\v_r)$ such that $\v_1,\v_2,..,\v_r$ are $\F_q$-linearly independent and $$|{\rm Span}_{\F_q}(\v_1,\v_2,..,\v_r)\cap C|\geq L.$$
Obviously, $|\mathcal{F}_r|\leq |B_R(0,\rho n)|^r\leq \left(4q^{mn(\rho+\rho b-\rho^2 b)}\right)^r.$

Let $\mathcal{F}=\bigcup_{r=l}^L\mathcal{F}_r.$ If $|B_R(0,\rho n)\cap C^*|\geq L$, then there must exist an integer $r, l\leq r\leq L$ and $\v=(\v_1,...,\v_r)\in \mathcal{F}$ such that $\{\v_1,...,\v_r\}\subseteq C^*$. For simplicity, we can take $\{\v_1,...,\v_r\}$ to be a maximal $\F_q$-linearly independent subset of $B_R(0,\rho n)\cap C^*$. Thus we have,

\begin{eqnarray*}\Delta &\leq& Pr_{C^*}[|B_R(0,\rho n)\cap C^*|\geq L]\leq \sum_{\v\in\mathcal{F}}Pr_{C^*}[\{\v_1,..,\v_r\}\subseteq C^*]\\
&=&\sum_{r=l}^L\sum_{\v\in\mathcal{F}_l}Pr_{C^*}(\{\v_1,..,\v_r\}\subseteq C^*)\end{eqnarray*}

Let $\v=(\v_1,..,\v_r)\in\mathcal{F}_r$, then $\v_1,..,\v_r$ are linearly independent over $\F_q$, Therefore, we have $$Pr_{C^*}[\{\v_1,..,\v_r\}\subseteq C^*]=\prod_{i=0}^{r-1}\frac{q^{Rmn+1}-q^i}{q^{mn}-q^i}\leq \left(\frac{q^{Rnm+1}}{q^{nm}}\right)^r.$$

Thus
\begin{eqnarray*}\Delta&\leq & \sum_{r=l}^L\sum_{\v\in\mathcal{F}_r}\left(\frac{q^{Rnm+1}}{q^{nm}}\right)^r=\sum_{r=l}^L|\mathcal{F}_r|\left(\frac{q^{Rnm+1}}{q^{nm}}\right)^r\\
&\leq &\sum_{r=\lceil4/\epsilon\rceil}^{L}\left(\frac{q^{Rnm+1}}{q^{nm}}\right)^rq^{mnr(\rho+\rho b-\rho^2 b)+r\log_q4}\\
&\leq&\sum_{r=\lceil4/\epsilon\rceil}^{L}q^{mnr(\rho+\rho b-\rho^2 b+R-1+\frac{1+\log_q4}{mn})}\\
&\leq& Lq^{-2mn}\quad(\mbox{replace by}\; R=(1-\rho)(1-b\rho)-\epsilon, m=\lceil\frac{4}{\sqrt{b\epsilon}}\rceil, n=bm)\\
&\leq&q^{-mn/2}q^{-mn}\quad( \mbox{for sufficiently large}\;m, n ).\end{eqnarray*}
This completes the proof.\end{proof}
\begin{rem}When $b=0$, $R=1-\rho-\epsilon$ and $R=(1-\rho)(1-b\rho)-\epsilon$ coincide. This means that when $m$ increases faster than $n$, the decoding radius and the rate of random $\F_q$-linear rank metric codes satisfy the Singleton bound with high probability, and the list size is $O(\exp(1/\epsilon))$ (note that in this case, the list size is $O(1/\epsilon)$ for random rank metric codes).\end{rem}

The next theorem gives a constrains of list decoding radius for $\F_q$-linear rank metric codes. More precisely speaking, we prove that for any $\F_q$-linear rank metric code with list decoding radius $\rho$, its rate $R$ is bounded by the Gilbert-Varsharmov bound.
\begin{thm}Let $b=\lim_{n\rightarrow \infty}\frac{n}{m}$ be a constant. Then for any $R\in(0,1)$ and $\rho\in (0,1)$, a $(\rho, L)$-list decodable $\F_q$-linear rank metric code $C\subseteq \F_{q^m}^n$ with rate $R$ must satisfy
$$R\leq (1-\rho)(1-b\rho).$$\end{thm}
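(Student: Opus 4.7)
The plan is to use the standard pigeonhole/averaging argument that already appears earlier in the paper (in Proposition~2.1 and in the converse theorem of Section~3), now combined with the sharper volume estimate provided by Lemma~\ref{gad}. The $\F_q$-linearity of $C$ is not actually needed for this direction; what is really being proved is that \emph{any} $(\rho,L)$-list decodable rank metric code with subexponential $L$ obeys the claimed bound. Linearity appears in the statement only because the converse matches the achievability result, Theorem~\ref{thm3}.

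First, I would double-count incidences between centers $\x\in\F_{q^m}^n$ and codewords $\c\in C$. Since $\c\in B_R(\x,\rho n)$ iff $\x\in B_R(\c,\rho n)$ and every rank-metric ball of radius $\rho n$ has the same cardinality $|B_R(0,\rho n)|$, we obtain
\[
\sum_{\x\in \F_{q^m}^n}|B_R(\x,\rho n)\cap C| \;=\; |C|\cdot |B_R(0,\rho n)|.
\]
The $(\rho,L)$-list decodability of $C$ forces every term on the left to be at most $L$, so
\[
|C|\cdot|B_R(0,\rho n)| \;\leq\; L\cdot q^{mn}.
\]

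Next, I would invoke the lower bound $|B_R(0,\rho n)|>q^{\rho n(m+n-\rho n)}$ from Lemma~\ref{gad}, take $\log_q$ of both sides of the previous display, and divide through by $mn$. This gives
\[
R \;<\; \frac{\log_q L}{mn} \;+\; 1-\rho-\frac{n}{m}\,\rho(1-\rho).
\]
Finally, letting $n,m\to\infty$ with $n/m\to b$ and using the (implicit, as in Proposition~2.1) assumption that $L$ is at most polynomial in $mn$ so that $(\log_q L)/(mn)\to 0$, the desired bound
\[
R \;\leq\; 1-\rho-b\rho(1-\rho) \;=\; (1-\rho)(1-b\rho)
\]
drops out.

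No step here is expected to present real difficulty. The only conceptual point worth flagging is that $\F_q$-linearity is inessential to the argument: the averaging bound is a counting statement about an arbitrary subset $C\subseteq\F_{q^m}^n$. The role of linearity in the paper is on the achievability side (Theorem~\ref{thm3}), where it enables the tighter threshold $(1-\rho)(1-b\rho)$ rather than the Singleton-type threshold $1-\rho$ available for arbitrary random codes; this theorem is its matching converse.
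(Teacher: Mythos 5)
Your proof is correct, and it reaches the stated bound by a slightly but genuinely different route than the paper. The paper exploits $\F_q$-linearity directly: it partitions the ball $B_R(0,\rho n)$ into its intersections with the $q^{(1-R)mn}$ cosets of $C$ and applies the pigeonhole principle over cosets, concluding that some translate $C+\x$ meets the ball in at least $q^{mn(R-(1-\rho)(1-b\rho))}$ points. You instead average the incidence count $\sum_{\x}|B_R(\x,\rho n)\cap C|=|C|\,|B_R(0,\rho n)|$ over all $q^{mn}$ centers; since $|C|/q^{mn}$ equals the reciprocal of the number of cosets, the two averages produce the identical exponent $mn\bigl(R-(1-\rho)(1-b\rho)\bigr)$, so the arguments are numerically equivalent. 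What your version buys is generality: it shows linearity is indeed inessential for this converse, and it amounts to running the paper's Proposition on the upper bound for the decoding radius without discarding the positive term $\rho(1-\rho)n^2$ in the exponent (the paper drops it there to get the weaker Singleton-type bound $\rho\le 1-R$). What the paper's version buys is structural parallelism with the coset-based achievability proof of Theorem 4.1 and an explicit exhibition of a single bad center. One point to note: both proofs implicitly require $L$ to be subexponential in $mn$ (the paper phrases its conclusion as ``the list size is exponential in $nm$'' when $R$ exceeds the bound, while you assume $L={\rm poly}(mn)$ so that $(\log_q L)/(mn)\to 0$); you correctly flag this, and it matches the convention of the paper's Proposition 2.1, so it is not a gap.
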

\begin{proof}Let $C\subseteq\F_{q^m}^n$ be an $\F_q$-linear rank metric code with rate $R$. We claim that if $R> (1-\rho)(1-b\rho)$, then there exists a vector $\x\in\F_{q^m}^n$ such that $|B_R(\x,\rho n)\cap C |$ is exponential in $nm$.

For any $\y\in\F_{q^m}^n$, we have
$$|B_R(\y,\rho n)\cap C|=|B_R(0,\rho n)\cap (C+\y)|.$$
Since $C$ is $\F_q$-linear, $C$ gives a total disjoint partition of $\F_{q^m}^n$. Let $S$ denoted the set of coset representatives of $C$ in $\F_{q^m}^n$. Then $|S|=\frac{q^{mn}}{|C|}=q^{(1-R)mn}$, and we have
$$B_R(0, \rho n)=\bigcup_{\y\in S}^{\bullet}(B_R(0, \rho n)\cap (C+\y)).$$
i.e.,
$$\sum_{\y\in S}|(B_R(0, \rho n)\cap (C+\y))|=|B_R(0, \rho n)|$$

By the Pigeonhole principle, there exists $\x\in \F_{q^m}^{n}$ such that
\begin{equation}\label{3}|B_R(0,\rho n)\cap (C+\x)|\geq \frac{|B_R(0, \rho n)|}{|S|}\geq \frac{q^{mn(\rho+b\rho-b\rho^2)}}{q^{(1-R)mn}}=q^{mn(R-(1-\rho)(1-b\rho))}.\end{equation}
By the given condition in the theorem, we have $-(1-\rho)(1-b\rho)+R>0$, then (\ref{3}) implies that list size of $\x$ is exponential in $nm$. This completes the proof.\end{proof}
\begin{rem}The result in \cite{wach12} shows that if the decoding radius of Gabidulin code is greater than the Johnson bound $1-\sqrt{R}$, the list size must be exponential. This result can be viewed as a special case of our theorem for $b=1$.\end{rem}

\section{Conclusion}
In this paper, we discuss the list decodability of rank metric codes. Our results show that: (1) For rank metric codes, the Singleton bound $R=1-\rho$ is the list decoding barrier. Moreover, with high probability, the decoding radius and the rate of random rank metric codes satisfy the Singleton bound for small ratio $n/m$. (2) For $\F_q$-linear rank metric codes with $n/m$ tends to constant $b$, the Gilbert-Varsharmov bound $R=(1-\rho)(1-b\rho)$ is the list decoding barrier. Furthermore, with high probability, the decoding radius and the rate of random $\F_q$-linear rank metric codes satisfy the Gilbert-Varsharmov bound. Figure 1 show these results (where $b=1/2$).

To end this paper, we propose two open problems which might be worth of investigating:
\begin{itemize}\item[(1)]What is the list decoding barrier for linear rank metric codes (i.e., $\F_{q^m}$-subspace of $\F_{q^m}^n$)?\item[(2)] For a random code $C\in\F_q^n$ with rate $R$, with high probability $C$ is $(1-R-\epsilon, L)$-list decodable, where $\Omega(\log(1/\epsilon))\leq L\leq O(1/\epsilon)$. For rank metric codes, Theorem \ref{thm1} shows that almost all random rank metric codes are $(1-R-\epsilon, L)$-list decodable with $L\leq O(1/\epsilon)$. Moreover, Remark \ref{rem1} shows that with high probability the list size $L$ tends to $\infty$ when $\epsilon\rightarrow 0$. Can we derive a reasonable lower bound of the list size?\end{itemize}

\begin{center}\begin{figure}
  \includegraphics[width=7.5in]{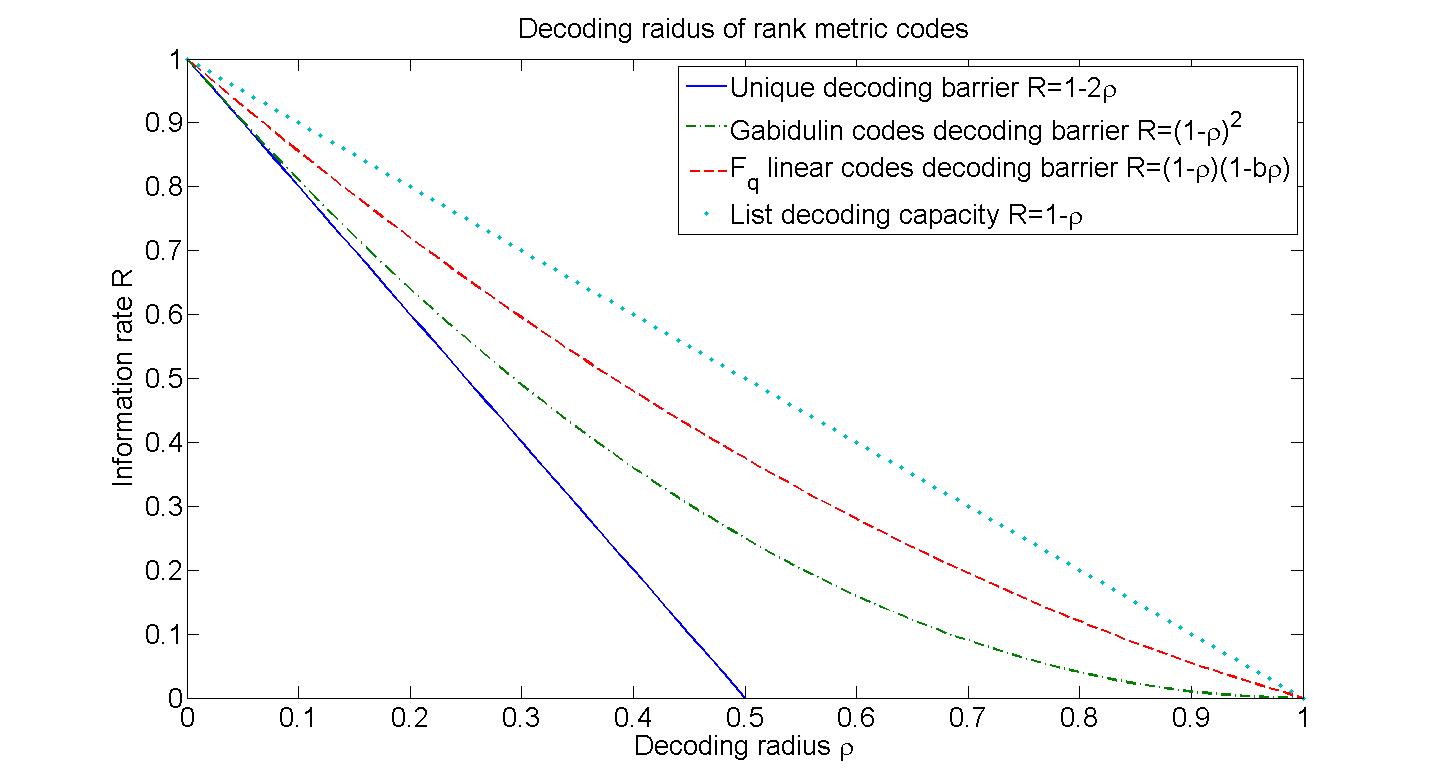}\\
  \caption{Decoding radius of rank metric codes}\label{fig1}
\end{figure}\end{center}
\section*{Acknowledgment}

The author is grateful to Prof. Xing Chaoping for his guidance and several useful discussions.

%
%
%





\begin{thebibliography}{99}

\bibitem{babu95}
N. S. Babu,
\emph{Studies on rank distance codes}, PhD. Dissertation, IITP, Madras, 1995.


\bibitem{dels}
P. Delsarte,
\emph{Bilinear forms over a finite field, with applications to coding theory}, Jouranl of Combinatorial Theory A, 1978(25), 226-241.


\bibitem{Gabidulin85}
E. M.  Gabidulin, \emph{Theory of codes with maximum rank distance}, Probl. Inf. Transm., 1985(21), 1-12.

\bibitem{Gabi10}
E. M. Gabidulin, \emph{Rank-metric codes and their applications}, manuscript.
	
\bibitem{Gadouleau08}
M. Gadouleau and Z. Y. Yan,
\emph{Packing and covering properties of rank metric codes}, IEEE Trans. Inform. Theory, 2008(54), 3873-3883.
	
\bibitem{Gadouleau06}
M. Gadouleau and Z. Y. Yan,
\emph{Properties of codes with the rank metric},
Proceeding of Globecom. GLO-BECOM, San Francisco, 2006.

\bibitem{Gadouleau082}
M. Gadouleau and Z. Y. Yan,
\emph{Complexity of decoding Gabidulin codes}, Proc. Annual Conf. Inform. Sciences and Syst., Princeton, NJ, 2008, 1081-1085.

\bibitem{guruxing12}
V. Guruswami and  C. Xing,
\emph{List decoding Reed- Solomon, algebraic-Geometric and Gabidulin subcodes up to the Singleton bound},
Electronic Colloquium on Computational Complexity,
2012,
183-189.
	
\bibitem{guru10}
	  V. Guruswami  J. Hastad  and S. Kopparty,
	  \emph{On the list-decodability of random linear codes},
	  42nd Annual ACM Symposium on Theory of Computing,
	 June, 2010.

\bibitem{guruwang12}
	  V. Guruswami  S. Narayanan  and C. Wang,
	 \emph{List decoding subspace codes from insertions and deletions},
	  Proceedings of Innovations in Theoretiacal Computer Science,
	 146, 2012.
	
	
\bibitem{guruwang13}
	  V. Guruswami and C. Wang,
	 \emph{Explicit rank-metric codes list decodable with optimal redundancy},
	  available at http://arxiv.org/pdf/1311.7084.pdf.

\bibitem{Helleseth95}
	  T. Helleseth  T. Klove  V. Levenshtein  and O. Ytrehus,
	 \emph{Bounds on the minimum support weights},
	IEEE Trans. Inform. Theory,
	 1995(41), 432-440.


\bibitem{hua}
	  L. K. Hua,
	 \emph{A theorem on matrices over a fields and its applications},
	  Chinese mathematical society,
	 1951(1), 109-163.
	
\bibitem{kkcodes}
	 R. Kotter and F. R. Kschischang,
	 \emph{A Welch-Berlekamp like algorithm for decoding Gabidulin codes},
	 IEEE Trans. Inform. Theory, 2008(54), 3579-3591.

\bibitem{loid05}
	 P. Loidreau,
	 \emph{Coding for errors and erasures in random network coding},
	 Lecture notes on Computer Science 3969, 2006, 36-45.
		

\bibitem{mahvar}
	  H. Mahdavifar and  A. Vardy,
	 \emph{List-decoding of subspace codes and rank-metric codes up to Singleton bound},
	   2012 IEEE International Symposium on Information Theory Proceedings (ISIT), 2012, 1488-1492.

\bibitem{richter}
	   G. Richter and S. Plass,
	 \emph{Error and erasure decoding of rank-codes with a modified Berlekamp-Massey algorithm},
	  in Proc. ITG Gonf. on Source and Channel Coding, Erlangen, Germany,
2004, 249-256.


\bibitem{rose}
	   J. Rosentha  N. Silberstein  and A. Trautmann,
	 \emph{On the geometry of balls in the Guasssmannian and list decoding of lifted Gabidulin codes},
	  available at http://arxiv.org/pdf/1309.0403.pdf.


\bibitem{traut}
	  A. Trautmann  N. Silberstein  and J. Rosentha, \emph{List decoding of lifted gabidulin codes via the plucker embedding},
available at http://arxiv.org/pdf/ 1301.2165.pdf.
	
	
	
\bibitem{wach12}
	  A. Wachter-zeh,
	 \emph{Bound on list decoding Gabidulin codes},
	  available at http://arxiv.org/pdf/1205.0345.pdf.
	
\bibitem{wach13}
	  A. Wachter-zeh,
	 \emph{Bound on list decoding of rank-metric codes},
	 IEEE Trans. Inform. Theory, 2013(59), 7268-7277.
\end{thebibliography}
\end{document}